\documentclass[sn-apa]{sn-jnl}%


\usepackage{graphicx}%
\usepackage{multirow}%
\usepackage{amsmath,amssymb,amsfonts}%
\usepackage{amsthm}%

\usepackage{mathrsfs}%
\usepackage[title]{appendix}%
\usepackage{xcolor}%
\usepackage{textcomp}%
\usepackage{manyfoot}%
\usepackage{booktabs}%
\usepackage{listings}%
\usepackage{amsmath,amssymb, graphicx, url, algorithm2e}

\usepackage{amsfonts}

\usepackage{mathtools}
\usepackage[capitalize,noabbrev]{cleveref}

\usepackage{times}
\usepackage{wrapfig}
\usepackage{mathtools}

\usepackage{mismath}

\usepackage{comment}

\usepackage{bbm}


\newcommand{\mb}{\mathbb}

\newcommand{\wh}{\widehat}

\newcommand{\EE}{\mb{E}}
\newcommand{\RR}{\mb{R}}
\newcommand{\PP}{\mb{P}}


\theoremstyle{thmstyleone}%
\newtheorem{theorem}{Theorem}
\newtheorem{proposition}[theorem]{Proposition}%
\newtheorem{fact}[theorem]{Fact}%
\newtheorem{lemma}[theorem]{Lemma}%

\theoremstyle{thmstyletwo}%
\newtheorem{remark}{Remark}%

\theoremstyle{thmstylethree}%
\newtheorem{definition}{Definition}%

\raggedbottom

\begin{document}

\title[Effects of Privacy-Inducing Noise on Welfare and Influence of Referendum Systems]{Effects of Privacy-Inducing Noise on Welfare and Influence of Referendum Systems}


\author*[1]{\fnm{Suat} \sur{Evren}\let\thefootnote\relax\footnotetext{Suat Evren, Departments of Mathematics, Economics, and Computer Science, MIT, e-mail: \href{evrenis@mit.edu}{evrenis@mit.edu}.}
}
\author[1]{\fnm{Praneeth} \sur{Vepakomma}}

\affil[1]{ \orgname{Massachusetts Institute of Technology}}




\abstract{Social choice functions help aggregate individual preferences while differentially private mechanisms provide formal privacy guarantees to release answers of queries operating on sensitive data. However, preserving differential privacy requires introducing noise to the system, and therefore may lead to undesired byproducts. Does an increase in the level of differential privacy for releasing the outputs of social choice functions increase or decrease the level of influence and welfare, and at what rate? In this paper, we mainly address this question in more precise terms in a referendum setting with two candidates when the celebrated randomized response mechanism is used. We show that there is an inversely-proportional relation between welfare and privacy, and also influence and privacy.}

\keywords{differential privacy, randomized response, social choice functions, welfare, influence}

\pacs[JEL Classification]{D71, C65}


\maketitle

\section{Introduction} 
Differential privacy \citep{DMNS06} provides a compelling privacy guarantee to ensure that the outcome of a query over any dataset is substantially not influenced based on the presence or absence of an individual's record. This form of privacy has recently been studied in the context of social choice theory \citep{shang2014, lee2015efficient, DPRankAggregation2017}. A predominant strategy to achieve differential privacy in general even outside the context of social choice theory is to introduce noise or some sort of randomization into the system. One of the issues that has been widely studied in this context of noising is the specific loss of accuracy in releasing the true output of the non-privatized query as caused by increasing levels of privacy preservation. This has been commonly referred to as the \textit{privacy-accuracy} or \textit{privacy-utility trade-off}. Recent work has involved the formalization of other trade-offs such as the trade-off between privacy and fairness \citep{cummings2019compatibility}. In this work, we analyze two other trade-offs. We show that introducing noise to privatize systems that aggregate the preferences of individuals may affect several other fundamental phenomena such as \emph{influence} and \emph{welfare}. 

In this context, does an increase in the level of privacy for releasing the outputs of social choice functions, increase or decrease the level of influence and welfare, and at what rate? In this paper, we mainly address this question in more precise terms and affirmatively answer that this relation is inversely-proportional and shares specific corresponding rates for the popular \emph{$\rho$-correlated randomized response} mechanism of privatization when used in a referendum setting with two candidates.

The noisy  mechanism that we propose and analyze with regard to influence and welfare in this paper is based on a simple coin-flipping perturbation of the input as follows. Let $\rho$ be an exogenous constant in $[0,1]$ and let each original vote made in the ballot take a value of either $1$ or $-1$. The randomized response records each original vote in the ballot as it is with a probability $\rho$ while with probability $1-\rho$, it ignores the original vote and instead records it as either a $1$ or $-1$ with a uniformly random pick.  The resulting probability space is known as \emph{$\rho$-correlated distribution} or \emph{noisy distribution} in the field of analysis of Boolean functions, and it is referred to as the \emph{randomized response} mechanism in the field of differential privacy. \footnote{For a survey of the field of analysis of Boolean functions, see \cite{o2014analysis}. For a survey of the field of differential privacy, see \cite{dwork2014algorithmic}.} We show that this mechanism preserves ordinal relations between the influences of voters for `any' social choice function. Therefore, if Alice had more influence before than Bob, she will still continue to have more influence. 

In the field of analysis of Boolean functions, the notion of the \emph{influence} of a voter is used to measure the power of an individual on the final result of a social choice function. We extend this definition of influence to our probabilistic setting where noise is introduced for privacy, and term this new notion of influence as \emph{probabilistic influence}. 
Similarly, we define \emph{welfare} to address the second issue of capturing how \emph{ideal} a voting rule is. First, we define it for deterministic functions and then we extend this definition to any probabilistic mechanism. We then show the effect of our privacy inducing randomized response on the welfare of the system. In particular, we show that it preserves the ordinal relations between the welfare of voting systems. That is, if a social choice function $f$ had a greater welfare than $g$ in the deterministic setting after the randomized response $M_{\rho}$ is applied based on the exogenous parameter $\rho$, the welfare of $M_{\rho}f$ will continue to be greater than that of $M_{\rho}g$.

In this context, we share precise statements connecting the noising probabilities $\rho$ used in the mechanism $M_\rho$, their effect on level of privacy $\epsilon$ which in turn results in a specific level of influence and welfare expressed in terms of $\rho$. We precisely show that as the level of privacy increases, the welfare and influence happen to decrease at correspondingly specific rates. Arguably, having a higher welfare in a voting system is desirable and therefore we shine light on this new trade-off  between privacy and welfare. In terms of influence, it is questionable whether a decrease in influence with an increase in privacy is desirable or not. We believe it depends on the context, and therefore in this case, we do not refer to it as a trade-off but instead call it a scaling law. However, as we show in \cref{welfare section}, welfare of the society is equal to total influence of the society.

\subsection{Contributions} We contribute towards bridging differential privacy and social choice theory by deriving the following results on the effect of randomized response over influence, welfare and accuracy.
\begin{enumerate}
  \item \textbf{The privacy-influence relationship:} A notion of \textit{influence} is widely used in the analysis of Boolean functions to study social choice functions. We extend the notion of influence to the noisy setting, and call it \textit{probabilistic influence}. We then show a result relating the trade-off between $\rho-$correlated distribution based differential privacy and probabilistic influence. We show that such privatization changes the influence of every single voter by a factor of $\frac{1+\rho^2}{2}$. Thus, the randomized response preserves the ordinal relations between influences of agents while scaling them by a factor depending on $\rho$ while still ensuring their privacy is preserved. 
  \item  \textbf{The Privacy-welfare trade-off:} We define \emph{welfare} $W(f)$ of a social choice function $f$ and extend the definition to probabilistic mechanisms. Then, we show that $W(M_{\rho}f) = \rho \cdot W(f)$, i.e. the randomized response scales the welfare by a factor of $\rho$, whereby preserving the ordinal relations between the welfare of social choice functions.
  \item \textbf{Accuracy analysis:} We restrict the analysis of \emph{accuracy} (or \emph{utility}) of our mechanism to social choice functions, i.e. the functions with range $\{-1,1\}$. We give the accuracy for Dictatorship, Majority, AND, and OR functions. For dictatorship, AND, and OR functions, we provide a theoretical analysis of accuracy. For the Majority function, we give an asymptotic accuracy when $n$ goes to $\infty$ based on the existing results in the literature. We also give an exact analysis of accuracy for the Majority function for small $n$ by using a computational method that involves dynamic programming.
\end{enumerate}
\subsection{Organization} 
The rest of the paper is organized as follows. In \cref{motivation}, we provide further motivation and background. In \cref{RR and privacy}, we formally describe the differentially private randomized response mechanism. In \cref{probabilistic influence section}, we introduce the notion of probabilistic influence, and give one of our main results that influence scales down by the same constant for every individual. In \cref{welfare section}, we introduce the concept of welfare for general probabilistic mechanisms, and analyze it for randomized response. We shed light into the connection between influence and welfare, and give our second main result that randomized response scales down welfare by the same factor for any given social choice function. In \cref{accuracy analysis section}, we provide an analysis of the accuracy for the randomized response mechanism. In \cref{conclusion}, we discuss the possible future work and the limitations of this paper, and we conclude. Some preliminaries from social choice theory are provided in \cref{social choice theory preliminaries}. All of the proofs are relegated to \cref{appendix: proofs}.

\section{Motivation} \label{motivation}
To intuitively expand on the potential relation between privacy and influence, consider an instance where it might be the case that introduction of noise for the sake of obtaining privacy results in undesired shifts of the power held by different individuals in deciding the finally selected outcome. For example, say that a voter Alice would have had more impact on the outcome than Bob in a case where there is no privatization. It could as well be the case that the power balance shifts to Bob having more impact than Alice after a privacy-inducing noise is introduced. We conclusively show that this cannot be the case as the influence scales down for every voter with the increasing level of privacy by the same constant in the case of the popular randomized response privacy mechanism.

Secondly, regarding the potential relation between privacy and welfare, consider an instance where it may be the case that upon introduction of noise, the chosen social choice function that was originally used to aggregate the individual preferences into a final outcome ends up not being ideal anymore. Hence, it may instead be desirable to switch to another social choice function. For example, suppose that a system uses the majority function to decide which one of the two candidates is elected in the deterministic case. However, the majority function could be severely affected in some instances upon introduction of noise, and another function could end up being a \emph{better} choice. We show that as the privacy increases in the randomized response mechanism, the welfare of each social choice function scales down proportionally. This implies that if a function is a welfare maximizer before introducing noise, it still is a welfare maximizer after the introduction of the noisy mechanism. These two results are especially useful, as they imply that the designers of the initial deterministic social choice mechanism do not have to be concerned about whether their design is robust to the introduction of noise in terms of influence and welfare. 



We now discuss the work that has been done regarding influence and welfare in the context of social choice theory. Influences have long been studied in discrete Fourier analysis and theoretical computer science. The notion of influence was first introduced by \cite{penrose46} and it was first systematically studied by \cite{influenceBOL85}. Some other novel works related to influences in the context of social choice theory include, but are not limited to, KKL Theorem \citep{KKL} and the Majority is Stablest Theorem \citep{MajorityIsStablest2010}. We extend the notion of influence to the noisy setting and call it \emph{probabilistic influence}, and prove a direct linear relation between deterministic influence and probabilistic influence.

The question of the ideal voting rule has long been a matter of discussion in social choice theory. When there are only two candidates, the answer is relatively simple as the majority function seems to be the most ideal voting rule. \cite{May1952} showed that majority is the only social choice function that is anonymous and monotone among all two-candidate voting rules. For more than two candidates, different objectives may result in different voting rules, or even in impossibility results \citep{arrow1950difficulty, arrow2012social, Guildbau, GK68, GS73}. \cite{Hillinger2005} studies various aspects of utilitarian voting. Finding the best function in computationally efficient ways has been studied in the recent field of computational social choice theory. The works of \cite{Mandal2019}, and \cite{Mandal2020} aim to maximize welfare given each voter's utility for candidates in a `distortion framework' in which there is a lack of information about voter's utilities. In that framework, a typical approach is to attempt to maximize the worst-case objective. 

To the best of our knowledge, a definition of welfare that is closest to ours is the one given by O'Donnell (\citeyear{o2014analysis}, page 51). Although they do not explicitly define welfare of a social choice function, there is a linear relation between the expected value of their objective function and the way we define welfare. However, our main conceptual contribution is that our definitions are extend to hold for probabilistic mechanisms and we analyze the effects of privacy on influence and welfare. \cite{o2014analysis} proved that among all two-candidate voting rules, majority is the unique maximizer of welfare, whose proof is essentially based on \cite{Titsworth}. Our main objective is not to find the function that maximizes the welfare; that is rather a simple question. In fact, we show that majority is the unique welfare maximizer as well in an almost identical way to O'Donnell. The primary motivation of the paper is to show that if a voting rule is better in the deterministic setting, it is still better after the privacy-inducing noise is introduced.

\section{Model: Randomized Response and Privacy Guarantee} \label{RR and privacy}
There are three main reasons as to why we chose the randomized response as the privacy-preserving mechanism to focus our attention. First, it is simple, in addition to being one of the earliest, and yet one of the most popularly used privacy-preserving mechanisms to date, be it in the classic form or as a variant of it. As an example, RAPPOR \citep{erlingsson2014rappor} is a recent popular real-world use-case of randomized response, otherwise classically used a few decades ago \citep{warner1965randomized, mangat1994improved}. Second, the mechanism is based on perturbations of the input which allows it to be applied to `any' social choice function. This enables us to talk about the ordinal relations between the welfare of potential social choice functions before and after the mechanism is applied. Third, $\rho$-correlated distributions are well studied in mathematical social choice theory \citep{o2014analysis}.

Our randomized mechanism is an input-perturbing mechanism. That is, the mechanism introduces noise to the votes in the ballot so that one can use any social function afterward, yet the same privacy guarantee will continue to hold due to the post-processing property 
\cite{dwork2009differential} of differential privacy. Randomized response introduces noise by utilizing a simple coin-flip scheme that is based on the following distribution that is widely used in the analysis of Boolean functions.  \\

\begin{definition} Let $\rho \in [0,1]$ and $x \in \{-1,1\}^n$ be fixed. $y$ is called \emph{$\rho$-correlated with $x$} if for every $i \in [n]$, $y_i = x_i$ with probability $\rho$ and uniformly distributed with probability $1 - \rho$, and it is denoted by $y \sim N_{\rho}x$. \\
\end{definition}

Note the symmetry in the definition of $\rho$-correlation. We formalize this symmetry in the following fact, which we will often use in the proofs of our results. \\

\begin{fact} \label{switching lemma} 
$x \sim \{-1,1\}^n, y \sim N_{\rho}x$ if and only if $y \sim \{-1,1\}^n, x \sim N_{\rho}y$. If $x \sim \{-1,1\}^n, y \sim N_{\rho}x$, we say $(x,y)$ is a \emph{$\rho$-correlated uniformly random pair}. \\
\end{fact}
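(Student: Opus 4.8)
The plan is to exploit the fact that both sampling procedures act independently across the $n$ coordinates, so that it suffices to verify the symmetry for a single coordinate and then take a product over $i \in [n]$. Concretely, under the process $x \sim \{-1,1\}^n$, $y \sim N_\rho x$, the pairs $(x_1,y_1), \dots, (x_n,y_n)$ are mutually independent, since each $x_i$ is drawn independently and uniformly and each $y_i$ is then generated from $x_i$ alone. Hence the joint law of $(x,y)$ factors as $\prod_{i=1}^n \mu(x_i,y_i)$ for a single-coordinate law $\mu$, and it is enough to understand $\mu$.

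First I would compute $\mu$ explicitly. By the definition of $N_\rho$, we have $\PP[y_i = x_i \mid x_i] = \rho + (1-\rho)/2 = (1+\rho)/2$ and $\PP[y_i = -x_i \mid x_i] = (1-\rho)/2$. Combining this with $\PP[x_i = a] = 1/2$ for $a \in \{-1,1\}$ gives, for all $a,b \in \{-1,1\}$,
\[
\mu(a,b) = \PP[x_i = a,\ y_i = b] = \tfrac{1}{4}\left(1 + \rho\, a b\right).
\]
The key observation is that this expression depends on $(a,b)$ only through the product $ab$, and is therefore manifestly symmetric: $\mu(a,b) = \mu(b,a)$.

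From this symmetry the claim would follow in both directions at once. Summing over $a$ shows the marginal law of $y_i$ is uniform, $\PP[y_i = b] = \tfrac{1}{2}$, matching the first clause $y \sim \{-1,1\}^n$ of the reversed process. The conditional law is then $\PP[x_i = a \mid y_i = b] = \mu(a,b)/\PP[y_i = b] = \tfrac{1}{2}(1 + \rho\, ab)$, which equals $(1+\rho)/2$ when $a = b$ and $(1-\rho)/2$ when $a = -b$; this is exactly the law $x_i \sim N_\rho y_i$. Taking the product over $i \in [n]$ recovers the joint law of the reversed process $y \sim \{-1,1\}^n$, $x \sim N_\rho y$, establishing the equivalence and justifying the symmetric terminology ``$\rho$-correlated uniformly random pair.''

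I do not expect a substantial obstacle here; the content is a routine symmetry computation and the only real care needed is bookkeeping between the two directions. The single genuinely useful move is recognizing the closed form $\mu(a,b) = \tfrac14(1 + \rho\, ab)$, whose symmetry in $a$ and $b$ makes the two-sided statement transparent and spares us from computing both conditional directions separately.
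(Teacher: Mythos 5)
Your argument is correct and complete: the coordinatewise factorization, the closed form $\mu(a,b)=\tfrac14(1+\rho\,ab)$, and its symmetry in $a$ and $b$ together give both directions at once. The paper states this as a \emph{Fact} without any proof (it is folklore in the analysis of Boolean functions, cf.\ O'Donnell), so there is no authorial proof to compare against; your derivation is the standard one and would serve as a valid justification of the statement as used in the paper.
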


In the literature, $\rho$-correlated distribution is sometimes referred to as \emph{noisy distribution}. A famous analogy for this definition is as follows. Suppose the votes are recorded by a \emph{noisy} machine. That is, the machine records each ballot correctly with probability $\rho$, and blurs the ballot with probability $1-\rho$ and instead records it at uniform random. As a result, the vote gets misrecorded with probability $(1-\rho)/2$. In fact, our mechanism corresponds to this noisy machine. Hence, we will call it by the generic name \emph{randomized response}, or \emph{$\rho$-correlated randomized response} when we need to specify $\rho$ and denote a mechanism that applies it by $M_\rho$ as defined below. \footnote{Note the subtle distinction between $M_\rho$ and $N_\rho$.} It is worth noting that \emph{$\rho$-correlated randomized response} is in essence just like \emph{randomized response} \citep{warner1965randomized}, a classic scheme that inspired several privacy mechanisms. \\

\begin{definition}
Let $f:\{-1,1\}^n \to \RR$ be any function. For every $x \in \{-1,1\}^n$, the \emph{randomized response} $M_{\rho}f(x)$ outputs $f(y)$ where $y \sim N_{\rho}x$. \\
\end{definition}

Now that we formally defined the randomized response mechanism, we can give the formal definition of differential privacy in our context.\\

\begin{definition}[$\epsilon$-Differential Privacy \cite{dwork2014algorithmic}] 
A randomized voting mechanism $\mathcal{A}:\{-1,1\}^n \to \{-1,1\}$ is $\epsilon$-differentially private if for all pair of neighboring voting profiles $\mathbf{x, x'}\in \{-1,1\}^n$ that differ in exactly one bit and for all $\mathbf{s} \in \{-1,1\}$,
$$
\Pr[\mathcal{A}(\mathbf{x}) =  \mathbf{s}] \leq e^\epsilon \Pr[\mathcal{A}(\mathbf{x'}) = \mathbf{s}] 
$$
\end{definition}

The above definition of differential privacy is specific to our context. For the general definition of differential privacy and a broad survey of the field, see \cite{dwork2014algorithmic}. The randomized response mechanism preserves $\varepsilon$-differential privacy. The following result holds for any Boolean function $f$. \\

\begin{proposition} \label{privacy}
For any $\rho \in [0,1]$, randomized response $M_{\rho}f$ preserves $\log({\frac{1+\rho}{1-\rho}})$-differential privacy regardless of the function $f: \{-1,1\}^n \to \RR$.
(or, ($\varepsilon$,0)-differential privacy when $\rho \leq 1 - \frac{2}{\exp(\varepsilon) +1}$). \\
\end{proposition}

\begin{proof}
Proof is relegated to Appendix \ref{proof of RR}.
 \end{proof}
 
 \begin{remark}\label{privacy remark}
 The equality case is satisfied if $f$ is a dictatorship, which implies that the bound $\log({\frac{1+\rho}{1-\rho}})$ is tight. That is, when $f$ is a dictatorship, $M_{\rho}f$ is not $\varepsilon$-differentially private for any $\varepsilon < \log({\frac{1+\rho}{1-\rho}})$. In fact, it can be shown that a social choice function $f$ satisfies the equality case if and only if there is a triple $(r,b,i)$ where $r \in \RR, b \in \{-1,1\}, i \in [n]$ such that $\emptyset \neq \{z \in \{-1,1\}^n| f(z) =r\} \subseteq \{z \in \{-1,1\}^n| z_i = b\}$. \\
 \end{remark} 

 The reason our mechanism preserves differential privacy for any Boolean function $f$ is that the mechanism is input-perturbing. In this sense, we could instead present the mechanism as $M_{\rho} : \{-1,1\}^n \to \{-1,1\}^n$ and write $f \circ M_{\rho}$ instead of $M_{\rho}f$. Then we could prove the analogous version of Proposition \ref{privacy}, and by using the post-processing property of differential privacy, we would again obtain Proposition \ref{privacy}. In fact, one can see that in the proof, we also prove the post-processing property, seemingly for no reason. However, the reason we choose to give the mechanism altogether after post-processing with $f$ is to make the all equality cases in the above remark apparent. Once post-processing is applied black-box, whether the privacy result is robust is not clear anymore. For example, consider any constant function $f$, e.g. $f(x) = 1$ for any $x \in \{-1,1\}^n$. In this case, $M_{\rho}f$ is not only $\log(\frac{1+\rho}{1-\rho})$-differentially private but $0$-differentially private.

\section{Probabilistic Influence} \label{probabilistic influence section}

Influence of a voter is a notion that is used to measure the power of an individual on a deterministic social choice function. Influences of Boolean functions have long been studied in computer science and the field of analysis of Boolean functions starting with \cite{influenceBOL85}. The \textit{influence} of a voter in a voting system is defined to be the probability of the change in outcome when the voter changes their vote \emph{ceteris paribus}. For example, in the case of a dictatorship, the dictator has influence $1$ while every other voter has influence $0$. In the majority function with $n=2k+1$ voters, each voter's influence is the same and equal to ${2k \choose k} /2^{2k}$.

We use  $x_{i \to 1} = (x_1, \cdots ,x_{i-1},1,x_{i+1}, \cdots, x_n)$ to denote the case where the $i$-th voter chooses to vote for $1$, and every other voter follows $x$. Similarly, we denote the alternate case where the $i$-th voter chooses to vote for $-1$ and every other voter follows $x$ by $x_{i \to -1} = (x_1, \cdots ,x_{i-1},-1,x_{i+1}, \cdots, x_n)$.  Using this notation,  influence in the deterministic setting is defined as follows. \\

\begin{definition}
 For $f:\{-1,1\}^n \to \{-1,1\}$, the influence of elector $i$ is defined as $$I_i[f] = \PP_{x \in \{-1,1\}^n}[f(x_{i \to 1}) \neq f(x_{ i \to -1})]$$ 
 The total influence of the function $f$ is defined to be 
\[ I[f] = \sum_{i = 1}^{n} I_i[f] \]
\end{definition}

A similar notion can be introduced in the probabilistic setting where the randomized response $M_{\rho}f(x)$ is applied. To do so, we consider the case where everybody casts their votes, following which $M_{\rho}f(x)$ is applied and the voter $i$ changes their vote. That is, we leave all the noisy versions of the votes cast by everyone as is except for the elector $i$'s vote. For this particular vote, we re-run the randomized response on coordinate $i$. The probability of result being different is called the \textit{probabilistic influence} of coordinate $i$. We now introduce the formal definition of the proposed probabilistic influence, which applies not only to social choice functions with range $\left\{-1,1\right\}$ but to all Boolean functions with range in $\mathbb{R}$ as follows. In the notation of the following definition, $y_i \sim N_{\rho}(1)$ refers to the case where voter $i$ chooses to vote for $1$ while $z_i \sim N_{\rho}(-1)$ refers to the case where voter $i$ chooses to vote for $-1$.\\

\begin{definition}
Let $f: \{-1,1\}^n \to \RR$ and the \textit{probabilistic influence} of coordinate $i$ in a mechanism $M_{\rho}f(x)$ is defined as 
\[ I_i[M_{\rho}f] = \EE_{x \sim \{-1,1\}^n, \forall j \neq i \text{ } z_j=y_j=x_j, y_i \sim N_{\rho}(1), z_i \sim N_{\rho}(-1)}[\left(\frac{f(y) - f(z)}{2}\right)^2] \]
The total influence of the mechanism $M_{\rho}f$ is defined to be 
\[ I[M_{\rho}f] = \sum_{i = 1}^{n} I_i[M_{\rho}f] \]
\end{definition}

We showed in Proposition \ref{privacy} that our probabilistic voting mechanism preserves $\varepsilon$-differential privacy. 
Inducing such privacy requires probabilistic mechanisms as opposed to using deterministic functions. For example, in the majority voting with $2k+1$ voters, if the votes are split $k$ to $k+1$, then changing only one bit in the input may change the outcome of the voting mechanism. Thus, it is not differentially private. Similarly, no deterministic Boolean function can preserve differential privacy unless it is a constant function. 

On the other hand, introducing noise may cause several issues in the voting system, one of which is the accuracy of the mechanism, which we will discuss in more detail in \cref{accuracy analysis section}. Another possible issue is that when noise is introduced, we might be altering the voting system in favor of a particular voter. For example, voter $A$ might have more influence relative to voter $B$ in the system now even if that was not the case before. For symmetric social choice functions, it is natural to expect that the randomized response mechanism would have the same effect for any voter since the noise is also symmetric. However, it is not as trivial for arbitrary social choice functions. Yet, we show that each voter's probabilistic influence is proportional to her influence in the deterministic setting. Therefore, the randomized response preserves the ordinal relations between influences of the voters regardless of the original social choice function being used. In other words, if voter $A$ had greater influence than another voter $B$, she will still have a greater influence on the system after the noise is introduced.  \\



\begin{theorem} \label{influence theorem}
Let $\rho \in [0,1]$ be any real number and $f: \{-1,1\}^n \to \RR$ be any function. For every $i \in [n]$, $I_i[M_{\rho}f] =\frac{1 + \rho^2}{2} I_i[f]$.
\end{theorem}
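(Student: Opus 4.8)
The plan is to work from the simplified form of the definition provided by Lemma~\ref{definition equivalence}, so that $y$ and $z$ agree with $x$ on every coordinate $j \neq i$ and differ only through the two independent draws $y_i \sim N_{\rho}(1)$ and $z_i \sim N_{\rho}(-1)$. Since for $j \neq i$ we have $y_j = z_j = x_j$, the summand $\left(\frac{f(y)-f(z)}{2}\right)^2$ depends on those coordinates only through $x$. I would therefore first condition on $x \sim \{-1,1\}^n$, compute the conditional expectation over the two bits $y_i, z_i$, and average over $x$ only at the very end.

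Fixing $x$, write $a = f(x_{i\to 1})$ and $b = f(x_{i\to -1})$. Then $y$ equals $x_{i\to 1}$ with probability $\frac{1+\rho}{2}$ and $x_{i\to -1}$ with probability $\frac{1-\rho}{2}$, while independently $z$ equals $x_{i\to -1}$ with probability $\frac{1+\rho}{2}$ and $x_{i\to 1}$ with probability $\frac{1-\rho}{2}$. Enumerating the four joint outcomes, the two ``diagonal'' ones where $y$ and $z$ land on the same point contribute $0$, while the two remaining ones each contribute $(a-b)^2$, with total weight $\frac{(1+\rho)^2 + (1-\rho)^2}{4}$. Using $(1+\rho)^2 + (1-\rho)^2 = 2(1+\rho^2)$ this gives
\[
\EE_{y_i,z_i}\!\left[\left(\tfrac{f(y)-f(z)}{2}\right)^2 \,\middle|\, x\right] = \frac{1+\rho^2}{2}\left(\frac{a-b}{2}\right)^2 = \frac{1+\rho^2}{2}\,D_i f(x)^2,
\]
where the last equality is just the definition of the discrete derivative $D_i f$.

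Taking $\EE_{x \sim \{-1,1\}^n}$ of both sides and recalling that $I_i[f] = \EE_x[D_i f(x)^2]$ finishes the argument. I do not expect a genuine obstacle: once Lemma~\ref{definition equivalence} has removed the noise on the coordinates $j \neq i$, the computation is elementary, and the only point demanding care is that $y_i$ and $z_i$ are drawn \emph{independently} with mutually flipped distributions, so that the cross terms do not cancel and one genuinely obtains the factor $\frac{(1+\rho)^2+(1-\rho)^2}{4}$ rather than something linear in $\rho$. As an alternative (Fourier) proof I would expand $I_i[M_{\rho}f] = \frac14\EE[f(y)^2 - 2f(y)f(z) + f(z)^2]$ and evaluate each expectation through Plancherel's Theorem, using $\EE[y_i]=\rho$, $\EE[z_i]=-\rho$ and the independence of $y_i,z_i$; the odd powers of $\rho$ cancel in pairs and one is left with $\frac{1+\rho^2}{2}\sum_{S \ni i}\wh{f}(S)^2 = \frac{1+\rho^2}{2}\,I_i[f]$, in agreement with the combinatorial computation.
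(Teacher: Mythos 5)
Your argument is correct and is essentially the paper's own combinatorial proof: both start from Lemma \ref{definition equivalence}, enumerate the four joint outcomes of the independent draws $y_i \sim N_{\rho}(1)$ and $z_i \sim N_{\rho}(-1)$, observe that the two agreeing outcomes contribute nothing while the two disagreeing ones carry total probability $\left(\frac{1+\rho}{2}\right)^2 + \left(\frac{1-\rho}{2}\right)^2 = \frac{1+\rho^2}{2}$, and then average $D_i f(x)^2$ over $x$. Your sketched Fourier alternative likewise matches the paper's second proof, so there is nothing to add.
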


\begin{proof}
    Proof is relegated to \cref{appendix proof of influence theorem}.
\end{proof}

\section{Welfare} \label{welfare section}

In this section, we introduce a formal definition of \emph{welfare} of social choice functions. Then we extend this definition to probabilistic mechanisms, and we show that the randomized response preserves the ordinal relations between the welfare of social choice functions. 

\subsection{Welfare of Deterministic Voting Systems}

\cite{rousseau} argues in his \emph{Social Contract} that an ideal voting rule should maximize the number of votes that agree with the outcome. For a more comprehensive discussion on this, see \cite{S08}. \cite{o2014analysis} proves that the majority function is the unique ideal function based on Rousseau's perception of the ideal voting rule without formally introducing welfare. Perhaps, when he proved this result, he had some form of welfare in his mind, especially because he uses the letter $w$ to denote the number of votes that agrees with the outcome. In this section, we will formally define welfare, which will be slightly different than what the $w$ notation of O'Donnell describes. In particular, we define \textit{welfare} of a social choice function $f: \{-1,1\}^n \to \{-1,1\}$ as the average difference between the number of votes that agree with the outcome and the number of votes that do not agree with the outcome under the impartial culture assumption. \\

\begin{definition}
Let $f: \{-1,1\}^n \to \{-1,1\}$ and $x \in \{-1,1\}^n $, and let $w_x(f) =|\{i ; x_i = f(x)\}| - |\{i ; x_i \neq f(x)\}|$. Welfare of the social choice function $f$ is defined to be
\[W(f) = \EE_x[w_x(f)].\]
\end{definition} 

We can still prove that the majority function is the unique maximizer of welfare when $n$ is odd by using a similar method as in the proof of Theorem 2.33 in \cite{o2014analysis}. \\

\begin{proposition} \label{majority is welfare maximizer}
When $n$ is odd, the unique maximizer of $W(f)$ is the majority function. 
\end{proposition}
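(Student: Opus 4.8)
The plan is to reduce $W(f)$ to a sum of degree-one Fourier coefficients and then argue by pointwise optimization. First I would observe that for each fixed $x$, since both $x_i$ and $f(x)$ lie in $\{-1,1\}$, the product $x_i f(x)$ equals $+1$ exactly when $x_i = f(x)$ and $-1$ otherwise. Summing over $i$ therefore collapses the difference of cardinalities into a single sum:
\[ w_x(f) = |\{i : x_i = f(x)\}| - |\{i : x_i \neq f(x)\}| = \sum_{i=1}^n x_i f(x) = f(x)\sum_{i=1}^n x_i. \]

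Next I would take the expectation over $x \sim \{-1,1\}^n$ and combine linearity with Plancherel's Theorem. Since $\chi_{\{i\}}(x) = x_i$, applying Plancherel with $g = \chi_{\{i\}}$ gives $\EE_x[x_i f(x)] = \wh{f}(\{i\})$, so that
\[ W(f) = \sum_{i=1}^n \EE_x[x_i f(x)] = \sum_{i=1}^n \wh{f}(\{i\}). \]
Thus maximizing welfare is equivalent to maximizing $\sum_i \wh{f}(\{i\})$ over all social choice functions, which is exactly the quantity treated in O'Donnell's Theorem 2.33.

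The core of the argument is a pointwise maximization. Writing $W(f) = \EE_x[f(x)\sum_i x_i] = 2^{-n}\sum_x f(x)\sum_i x_i$, I note that the values $f(x)$ at distinct inputs can be chosen independently, and each summand $f(x)\sum_i x_i$ is maximized by taking $f(x)$ to agree in sign with $\sum_i x_i$. Since $f(x) \in \{-1,1\}$, the best pointwise choice is $f(x) = \mathrm{sign}(\sum_i x_i)$, contributing $|\sum_i x_i|$ to the sum; any social choice function is thereby bounded above by $2^{-n}\sum_x |\sum_i x_i|$, with equality iff $f$ realizes this sign at every $x$.

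The decisive use of the hypothesis that $n$ is odd comes in the uniqueness claim: when $n$ is odd, $\sum_i x_i$ is an odd integer and hence never zero, so for every $x$ the sign is unambiguous and the maximizing value of $f(x)$ is forced. This forced assignment is precisely $Maj_n(x)$, so majority is not merely a maximizer but the unique one. I expect this uniqueness step to be the only delicate point: were $n$ even, the inputs with $\sum_i x_i = 0$ would permit either value of $f(x)$ without affecting the objective, and uniqueness would fail.
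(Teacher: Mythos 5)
Your proof is correct and follows essentially the same route as the paper's: rewrite $w_x(f) = f(x)\sum_i x_i$ and maximize pointwise, with oddness of $n$ guaranteeing $\sum_i x_i \neq 0$ so the sign (and hence the unique maximizer) is forced at every $x$. The Fourier/Plancherel detour is harmless but unnecessary for this proposition (the paper records that identity separately as its next proposition), and your explicit treatment of the uniqueness step is in fact slightly more careful than the paper's one-line version.
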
 

\begin{proof}
    Proof is relegated to \ref{proof majority welfare maximizer}.
\end{proof}

Without further assessment, it is not possible to say whether we prefer total influence to be larger or smaller for the welfare of society in a voting system. As we show in the following result, if the social choice function is monotone - that is if a voter changes her vote in favor of a candidate, then this candidate should be weakly better off – then these two notions collide with each other. This result has implications beyond being a simple identity, making the case that if we want to achieve a greater social welfare while adhering to monotone social choice functions, we must choose a function with a greater total influence. \\

\begin{proposition} \label{W = I}
Let $f$ be any monotone social choice function $f:\{-1,1\}^n \to \{-1,1\}$. Then, $W(f) = I[f]$.
\end{proposition} 

\begin{proof}
Proof is relegated to \cref{proof of W = I}. 
\end{proof} 

\subsection{Welfare of Noisy Mechanisms}

To capture the same notion for the probabilistic functions as well, we similarly define welfare of a randomized mechanism applied on a social choice function as follows. Note that the following definition is not only for the randomized response $M_{\rho}$, but any mechanism defined on social choice functions. \\

\begin{definition}
Let $f: \{-1,1\}^n \to \{-1,1\}$, $x \in \{-1,1\}^n $, and $M$ be any mechanism. Let $w_x(Mf) =|\{i ; x_i = Mf(x)\}| - |\{i ; x_i \neq Mf(x)\}|$. Welfare of the mechanism $M$ with the social choice function $f$ is defined to be 
\[W(Mf) = \EE_{x, M}[w_x(Mf)]\]
where the expectation is both over $x$ and the mechanism $M$. \\
\end{definition}

We showed in \cref{influence theorem} that although introducing $\rho$-correlated noise in a voting system has negative effects on influences, it does not provide an unfair advantage to any agent. Another possible undesired byproduct of a randomized mechanism could be that the effect of randomization on the welfare of a particular voting system is more severe compared to the other voting systems. For example, we showed in \cref{majority is welfare maximizer} that the majority function is the unique welfare maximizer. It could be the case that after we introduce noise, it is more likely in the majority function that the outcome will change. Within this context, the following result implies that every voting system is equally affected by the input-perturbing randomized response mechanism. Therefore the randomized response preserves the ordinal relations between the welfare of two-candidate voting systems. \\

\begin{theorem} \label{welfare theorem}
Let $f$ be any social choice function $f : \{-1,1\}^n \to \{-1,1\}$. Then, $W(M_{\rho}f) = \rho \cdot W(f)$. 
\end{theorem}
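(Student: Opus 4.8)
The plan is to mirror the deterministic welfare computation (the proofs of Proposition \ref{majority is welfare maximizer} and Proposition \ref{Welfare Fourier}) and then exploit the fact that $M_\rho$ only injects noise into the input. The starting point is the pointwise identity $w_x(M_\rho f) = M_\rho f(x) \cdot \sum_{i \in [n]} x_i$, which holds for exactly the same reason as in the deterministic case: the outcome $M_\rho f(x) \in \{-1,1\}$, and the signed count of agreeing minus disagreeing votes is precisely the outcome times $\sum_i x_i$. Taking the expectation over $x$ and over the internal randomness of the mechanism, and recalling that $M_\rho f(x) = f(y)$ for $y \sim N_\rho x$, gives
\[
W(M_\rho f) = \EE_{\substack{x \sim \{-1,1\}^n \\ y \sim N_\rho x}}\Big[f(y) \sum_{i \in [n]} x_i\Big].
\]

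For the combinatorial proof I would apply Fact \ref{switching lemma} to swap the roles of $x$ and $y$, so that the joint law of $(x,y)$ is equivalently described by $y \sim \{-1,1\}^n$ together with $x \sim N_\rho y$. Conditioning on $y$ and using linearity of expectation, the only quantity needed is $\EE_{x \sim N_\rho y}[x_i]$. Since each coordinate satisfies $x_i = y_i$ with probability $\rho$ and is uniform (mean zero) otherwise, we get $\EE_{x \sim N_\rho y}[x_i] = \rho\, y_i$, and hence $\EE_{x \sim N_\rho y}\big[\sum_i x_i\big] = \rho \sum_i y_i$. Substituting back yields $W(M_\rho f) = \rho\, \EE_y\big[f(y) \sum_i y_i\big] = \rho\, W(f)$, where the last step is the deterministic identity $w_y(f) = f(y)\sum_i y_i$ from Proposition \ref{majority is welfare maximizer}.

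For the Fourier proof I would instead recognize the inner expectation over $y$ as the noise operator: $\EE_{y \sim N_\rho x}[f(y)] = T_\rho f(x)$ by the definition in \ref{noiseOp}. Thus $W(M_\rho f) = \EE_x\big[T_\rho f(x) \sum_i x_i\big] = \sum_{i \in [n]} \EE_x[\chi_{\{i\}}(x)\, T_\rho f(x)]$. Applying Plancherel's Theorem to each term isolates the degree-one Fourier coefficient of $T_\rho f$, and by Proposition \ref{Fourier expansion of T_rho} we have $\wh{T_\rho f}(\{i\}) = \rho\, \wh{f}(\{i\})$. Summing gives $W(M_\rho f) = \rho \sum_i \wh{f}(\{i\}) = \rho\, W(f)$ via Proposition \ref{Welfare Fourier}.

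I do not anticipate a serious obstacle; the content is essentially bookkeeping, and the two routes are genuinely short. The one point requiring care is the expectation defining $W(M_\rho f)$, which is taken jointly over the input $x$ and the mechanism's randomness — unfolding this correctly as the $\rho$-correlated pair $(x,y)$ is what makes both arguments run. In the combinatorial route the crux is the switching lemma combined with the elementary computation $\EE[x_i] = \rho\, y_i$; in the Fourier route it is identifying $T_\rho$ and reading off its degree-one spectrum.
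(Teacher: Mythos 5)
Your proposal is correct, and both of your routes land on the paper's own two proofs with only minor differences. The Fourier argument is essentially identical to the paper's: write $W(M_\rho f)=\EE_x[T_\rho f(x)\sum_i x_i]$, apply Proposition \ref{Fourier expansion of T_rho} and Plancherel to extract $\rho\sum_i \wh{f}(\{i\})$, and conclude via Proposition \ref{Welfare Fourier}. Your combinatorial argument reaches the same conclusion by a slightly cleaner decomposition than the paper's. The paper introduces per-voter indicator variables $1_{i,x,\rho}$, reduces to $\PP[f(y)=x_i]$, and after invoking Fact \ref{switching lemma} splits this probability by Bayes' rule into the cases $x_i=y_i$ and $x_i=-y_i$; you instead keep $f(y)$ intact, condition on $y$ after the same swap, and use the one-line computation $\EE_{x\sim N_\rho y}[x_i]=\rho\, y_i$. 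The two computations are arithmetically equivalent (the paper's $(1+\rho)\PP[f(y)=y_i]+(1-\rho)\PP[f(y)\neq y_i]-1$ is exactly your conditional expectation unpacked), but yours avoids the indicator bookkeeping and makes the source of the factor $\rho$ more transparent. The one step you rightly flag as needing care --- that the expectation defining $W(M_\rho f)$ is jointly over $x$ and the mechanism's coin flips, i.e. over the $\rho$-correlated pair $(x,y)$ --- is handled correctly in both of your arguments.
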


\begin{proof}
    Proof is relegated to \cref{proof of welfare theorem}
\end{proof}

This result, together with \cref{majority is welfare maximizer}, implies that the majority function is the unique welfare maximizer also after the noise is introduced by applying the randomized response mechanism.

\section{Accuracy Analysis} \label{accuracy analysis section}
There is one significant drawback of the randomized response privatization mechanism in consideration. It is hard to analyze the accuracy of releasing the output of social choice functions upon privatizing it with the randomized response. Although our main objective in this work is not about the analysis of accuracy, we will dedicate a section to the analysis of accuracy for the sake of completeness. As a first pass, we easily find a \emph{generic} lower-bound on accuracy of the randomized response, but it ends up to be so low that it makes it redundant. Therefore, we restrict our analysis to \emph{specific} social choice functions. We theoretically provide results on accuracy for dictatorship, AND, and OR functions.\footnote{For formal definitions of these widely known social choice functions, see Appendix \ref{social choice theory preliminaries}.} In addition, we give a tight lower bound as well as an upper bound for the accuracy of majority function. We also give an algorithm to calculate exact accuracy of majority function by using dynamic programming via memoization. The dynamic programming approach avoids the need to make calculations over every entry in the power-set and instead is much more efficient, while still resulting in an exact solution for computing the accuracy. Our definition of accuracy is in-fact the average of accuracy under the impartial culture assumption. That is,
\[ Acc(M_{\rho}f) = \PP_{\substack{x \sim \{-1,1\}^n \\ M_{\rho}}}[M_{\rho}f(x) = f(x)].\]

Now, we define the \emph{noise operator}, also referred to as the noisy Markov operator, which is a linear operator on the set of Boolean functions. This operator will be useful for accuracy calculations. \\

\begin{definition} \label{noiseOp}For any $\rho \in [0,1]$, the noise operator $T_{\rho}$ is the linear operator on the set of functions $f:\{-1,1\} \to \RR$ defined by 
\[ T_{\rho}f(x) = \EE_{y \sim N_{\rho}x}[f(y)]. \]
\end{definition}

Before we start our analysis, let us also give the definition of \emph{noise stability}. \\ 

 \begin{definition}
 For any $\rho \in [0,1]$ and $f: \{-1,1\}^n \to \RR$, $\rho$-correlated noise stability of $f$ is given by
 $$Stab_{\rho}(f)=\EE_{\substack{x \sim \{-1,1\}^n \\ y \sim N_{\rho}(x)}} [f(x) \cdot f(y)]$$    
 
 \end{definition}

 There is a linear relation between the noise stability of a function and accuracy of the randomized response on this function. Note that $M_{\rho}f(x) \cdot f(x) = 1 $ if $M_{\rho}f(x) = f(x) $, $M_{\rho}f(x) \cdot f(x) = -1 $ otherwise. Thus, 
 \begin{equation}\label{Stab-Acc}
 2 \cdot Acc(M_{\rho}f) -1
 = 2 \cdot \PP_{\substack{x \sim \{-1,1\}^n \\ y \sim N_{\rho}(x)}} [f(y) = f(x)] - 1
 = \EE_{\substack{x \sim \{-1,1\}^n \\ y \sim N_{\rho}(x)}} [f(y) \cdot f(x)] 
 = Stab_{\rho}(f).
 \end{equation}
 
 Also, note that
 \begin{equation} \label{Stab-Markov}
 Stab_{\rho}(f)=\EE_{\substack{x \sim \{-1,1\}^n \\ y \sim N_{\rho}(x)}} [f(x) \cdot f(y)] = \EE_{x \sim \{-1,1\}^n} [f(x) T_{\rho}f(x)].
 \end{equation}
 
 The reason we feel the need to write accuracy in terms of stability is that in the field of Analysis of Boolean functions most results are given in terms of stability for convenience. Yet, we use stability explicitly only when we analyze the accuracy of the majority function.
 
\subsection{Majority}

In this section, we will give the asymptotic accuracy for $Maj_{n}$ function where $n$ is an odd number that goes to infinity. \\

\begin{lemma}[Proposition 10, \cite{ODonnel2004}] 
For any $\rho \in [0,1)$, $Stab_{\rho}[Maj_n]$ is a decreasing function of $n$ where $n$ is an odd number, with
\[ \frac{2}{\pi} \arcsin(\rho) \leq Stab_{\rho}[Maj_n] \leq \frac{2}{\pi} \arcsin(\rho) + O(\frac{1}{\sqrt{1-\rho^2}\sqrt{n}}).\]
\end{lemma}



By using the fact that accuracy is equal to $\frac{1}{2} + \frac{1}{2}Stab_{\rho}(f)$ due to \cref{Stab-Acc}, we get that
\begin{equation} \label{asymptotic accuracy of majority}
    \frac{1}{2} + \frac{1}{\pi} \arcsin(\rho) \leq Acc[M_{\rho}(Maj_n)] \leq \frac{1}{2} + \frac{1}{\pi} \arcsin(\rho) + O(\frac{1}{\sqrt{1-\rho^2}\sqrt{n}}).
\end{equation}

Despite this fact being quite useful, there is no convenient way to calculate the exact value of accuracy of the randomized response on Majority function. Hence, we compute it using dynamic programming via memoization in the following section.

\subsubsection{Algorithm to compute the exact accuracy for small $n$}We now provide a dynamic programming algorithm with memoization to compute the accuracy of the randomized response. In particular, we give the algorithm to calculate the accuracy of the \textit{threshold functions}, that are of the form
\[f_{\theta}(x) = 
\begin{cases}
1 & \text{if } \sum_{i \in [n]}x_i > \theta \\
-1 & \text{if } \sum_{i \in [n]}x_i  \leq \theta
\end{cases}\]
Note that $Maj_n = f_0(\cdot)$ where it takes care of ties by considering them as if $-1$ is the winner. In general, we work with the odd number of voters when we talk about the majority function. But as a simple trick, we will compute it for any $n$ based on the generic definition of the threshold function we gave above since it makes the algorithm less involved.

We now state the noise operator $T_{\rho} f_{\theta_{0}}(x)$ as introduced in \cref{noiseOp} when applied to threshold functions as a way to quantify the expected accuracy as
 \[T_{\rho} f_{\theta_{0}}(x) = \mathbb{E}_{y \sim N_{\rho}x}\left[{1}\left(y_{1}+\ldots y_{n}>\theta_{0}\right)\right].\]

Let $x_{-n}$ denote $x$ without the last bit. In particular, if $x = (x_1, x_2, \cdots, x_{n-1}, x_n)$, then $x_{-n} = (x_1, x_2, \cdots, x_{n-1})$. Note that $x_{-n} \in \{-1,1\}^{n-1}$ while $x \in \{-1,1\}^n$. Then, the stability can be defined using two calls of recursion as follows
$$\label{eq:first}{
T_{\rho} f_{\theta_{0}}{(x)}=\frac{1+\rho}{2} T_{\rho} f_{\theta_{0}-x_{n}}\left(x_{-n}\right)+\frac{1-\rho}{2} T_{\rho} f_{\theta_{0}+x_{n}}\left(x_{-n}\right)
}$$
That is because
\begin{align*}
    \mathbb{E}_{y \sim N_{\rho}x}&\left[1\left(y_{1}+\cdots + y_{n} > \theta_{0}\right)\right] \\
    &=\mathbb{E}_{y_{n} \sim N_{\rho}x_{n}}\left[\mathbb{E}_{y_{-n} \sim N_{\rho}x_{-n}}\left[\mathbbm{1}\left(y_{1}+\cdots + y_{n-1} > \theta_{0} - y_n \right) \mid y_{n}\right]\right] \\
    &= \frac{1+\rho}{2} \underset{y_{-n}\sim N_{\rho}\left(x_{-n}\right)}{ \mathbb{E}} \left[ \mathbbm{1} \left(y_{1} +\cdots+y_{n-1}>\theta_{0}-x_n\right) \right] \\
    &\qquad \qquad +\frac{1-\rho}{2} \underset{y_{-n}\sim N_{\rho}\left(x_{-n}\right)}{ \mathbb{E}}  \left[ \mathbbm{1} \left(y_{1} +\cdots+y_{n-1}>\theta_{0}+x_n\right) \right] \\
    &=\frac{1+\rho}{2} T_{\rho} f_{\theta_{0}-x_{n}}\left(x_{-n}\right)+\frac{1-\rho}{2} T_{\rho} f_{\theta_{0}+x_{n}}\left(x_{-n}\right)
\end{align*}
To summarize, this dynamic programming with memoization algorithm is as shown \cref{fig:algorithm}. In terms of notation we denote a specific dictionary (in terms of popular programming terminology of dictionary data types) as $\text {Dictionary: }\left\{(\rho, n, s, \theta)=T_{\rho} f_{\theta_{0}}(x) \text { for some } x \text { s.t } sum(x)=s \right\}$.

 Our approach is to use this proposed recursive relation with an appropriate initial condition to exactly compute the noise operator $T_{\rho}f(x)$. Then, by using \cref{Stab-Markov}, we calculate the Stability of the function. Finally, by using the linear relation between stability and accuracy from \cref{Stab-Acc}, we compute the exact accuracy. This dynamic programming approach avoids having to make $2^n$ computations, given that $x \sim \left\{-1,1\right\}^n$. Note that, $T_{p} f_{\theta_{0}}(x)=T_{p} f_{\theta_{0}}(z) \text { if } sum(x)=sum(z)$. Therefore we iterate over $i$ from $1$ to $n$ to represent vectors with $i$ number of $1's$. Then as the rest of entries are $-1$, and since the length of the array is $n$, this approach can model the exact sum of all possible vectors. Since the calculation of the stability is one-to-one with respect to sums, we store the intermediate results in a dictionary indexed by this sum. As there are $n \choose i$ vectors that can be represented this way, we just compute once per each $i$ and multiply it by $n \choose i$. This enables us to model all possible vectors efficiently but allows us to not have to compute the intermediate results every time via our recursive approach.

\begin{figure}[h!]
    \centering
    \includegraphics[scale = 0.5]{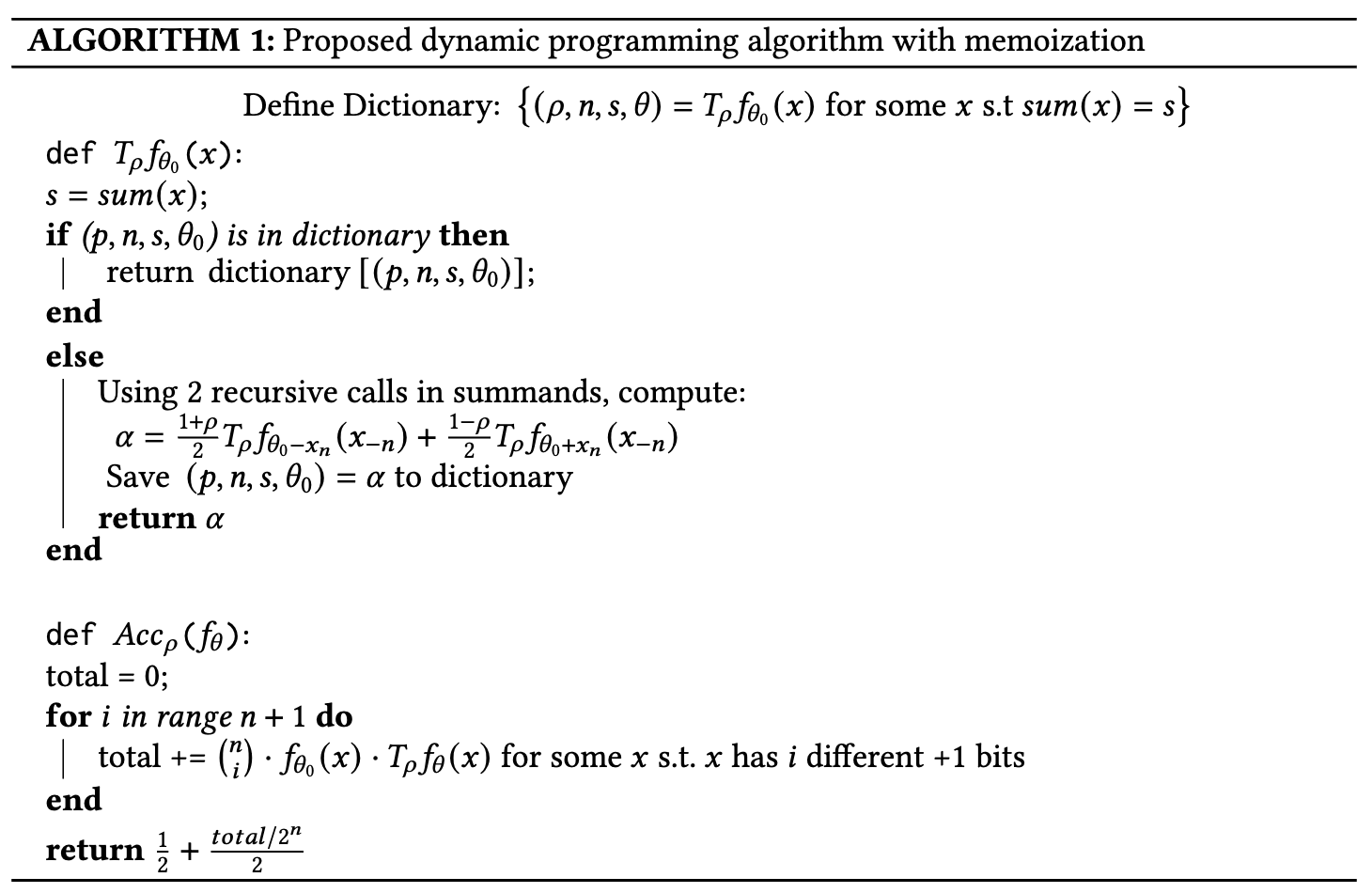}
    \caption{Proposed dynamic programming algorithm with memoization}
    \label{fig:algorithm}
\end{figure}

In \cref{fig:plot}, we plot the accuracy curves of the randomized response mechanism with varying values of $\rho$ applied to the majority function as the number of voters increases. Note that as $n$ goes to $\infty$, the accuracy asymptotically approaches to $\frac{1}{2} + \frac{1}{\pi} \arcsin(\rho)$ as implied by \cref{asymptotic accuracy of majority}.

\begin{figure}[h!]
    \centering
    \includegraphics[scale = 0.7]{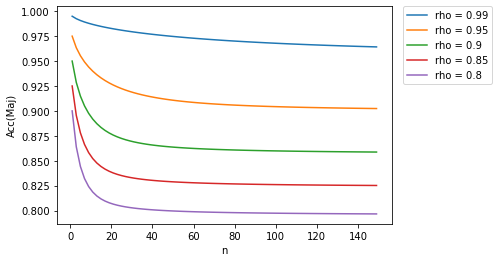}
    \caption{The accuracy curves of the randomized response mechanism with varying values of $\rho$ applied to the majority function as the number of voters increases.
}
    \label{fig:plot}
\end{figure}
\subsection{Dictatorship} Let $f: \{-1,1\}^n \to \{-1,1\}$ be the dictatorship of voter-$i$, that is $f(x) = 1$ if and only if $x_i =1$.

Then, for any given $x \in \{-1,1\}^n$, 
\[\PP[M_{\rho}f(x) = f(x)] = \PP_{y \sim N_{\rho}(x)}[f(y) = f(x)] = \PP_{y_i \sim N_{\rho}(x_i)}[y_i = x_i] = \frac{1+\rho}{2}.\]
Hence, the average accuracy is also equal to $\frac{1+\rho}{2}$. 

\subsection{$AND_n$ and $OR_n$} We will first make the calculations for $AND_n$ and the results will be analogous due to symmetry. We will make use of Fact \ref{switching lemma} in the analysis.

First, we start with a generic calculation that holds for any social choice function $f$. In the calculations in this section, our probability space is $x \sim \{-1,1\}^n, M_{\rho}f(x) \sim f(y)$ where $y \sim N_{\rho}x$.

Note that by Fact \ref{switching lemma}, 
\[\PP_{x, M_{\rho}} [M_{\rho}f(x) = 1] = \PP_x[f(x) = 1].\]

\begin{align*}
\PP\left[M_{\rho}f(x) = f(x) \right] 
=\PP\left[M_{\rho}f(x) = 1 \wedge f(x) = 1 \right] + \PP\left[M_{\rho}f(x) = -1 \wedge f(x) = -1 \right]
\end{align*}
and
\begin{align*}
\PP\left[M_{\rho}f(x) = -1 \wedge f(x) = -1 \right]
&= 1-\PP\left[M_{\rho}f(x) = 1 \vee f(x) = 1 \right] \\
&= 1-\PP\left[M_{\rho}f(x) = 1\right] - \PP\left[f(x) = 1 \right] + \PP\left[M_{\rho}f(x) = 1 \wedge f(x) = 1 \right] \\
&= 1 - 2 \cdot \PP\left[f(x) = 1 \right] + \PP\left[M_{\rho}f(x) = 1 \wedge f(x) = 1 \right].
\end{align*}
Thus for any social choice function $f$, 
\begin{align*}
\PP\left[M_{\rho}f(x) = f(x) \right] 
=1-2 \cdot \PP\left[f(x) = 1 \right] + 2 \cdot \PP\left[M_{\rho}f(x) = 1 \wedge f(x) = 1 \right]
\end{align*}

For $f = AND_n$, 
\[ \PP[f(x) = 1] = \prod_{i \in [n]} \PP[x_i = 1] = 2^{-n},\]
and
\[ \PP\left[M_{\rho}f(x) = 1 \wedge f(x) = 1 \right] = \PP[f(x) = 1] \cdot \PP[M_{\rho}f(x) = 1 | f(x) = 1] = 2^{-n} \cdot (\frac{1+\rho}{2})^{-n}.\] 
Hence, the accuracy of $M_{\rho}$ for $AND_n$ function is equal to $1 - 2^{-n+1}(1-(\frac{1+\rho}{2})^{n})$, whose limit goes to $1$ as $n$ goes to $\infty$. Due to symmetry, accuracy analysis is the same for $OR_n$ function.

\section{Conclusion}\label{conclusion}
The main objective in this work is to study the privacy-welfare trade-off and the relation between privacy and probabilistic influence.
The proposed definition of welfare happens to hold for any mechanism while on the other hand, the defined probabilistic influence is only specific to the randomized response mechanism. In fact, a more general definition of influence could be coined and a similar property could potentially be observed. We leave out this potential generalization of influence to future work. In terms of welfare, the analysis done in this paper can be replicated in a similar style to other popular privatization schemes such  as the Laplace and exponential mechanisms. The privacy-accuracy trade-off of the current mechanism for the majority function may also be further improved. Note that Dictatorship, AND, and OR functions satisfy the equality condition in \cref{privacy} as discussed in \cref{privacy remark}. Thus, the accuracy-privacy analyses for these functions are tight. On the other hand, for a given $\rho$, the asymptotic accuracy of majority is tight whereas the privacy result is a possibly loose upper bound. 

Also, our definitions of influence and welfare assume that the votes are unbiased, that is, they consider everybody to be equally likely to vote for $-1$ or $+1$. In fact, these definitions can be further generalized to cover the same concept, but for the case of biased voting. For example, one can extend the definitions to be \emph{$p$-biased} for a given $p \in [-1,1]$, that is the expected value of each vote is $p$ instead of $0$. $p$-biased distribution is also well-studied in the field of Analysis of Boolean functions. 

Finally, our voting model in this paper is a classical referendum model with two candidates. However, in most real-world applications, we generally have multiple candidates and we have to aggregate the rankings. If there is a Condorcet winner in a voting system, then the results regarding two-candidate elections can be directly applied in the multiple-candidate setting. Yet, in many cases, there is no Condorcet winner. Restricting the number of candidates to two has the primary advantage that both the definitions and analyses of welfare and influence naturally follow. We believe that extending the definitions and the tools developed in this paper to multiple-candidate settings would be interesting.

In a broader perspective, we study the effect of using privacy inducing randomized responses in the voting process. We construct a relation between the level of privacy and the resulting level of influence of voters involved in the voting system and the welfare of the chosen social choice function. An insightful takeaway that we can deduce from the derived relationships in this paper is that the ordering of voters' influences and the ordering of welfare amongst the considered social choice functions remain unchanged upon introducing noise via the celebrated randomized response mechanism. Existing works have extensively studied the relationship between privacy and the resulting accuracy in preserving the output of the query that was privatized. At a high level we are the first to shed light on the relationship between privacy and other important phenomena of influence and welfare. We hope that this bridge we have proposed between the two important fields of differential privacy and social choice theory will be further studied and extended as part of future works.




\bibliography{sn-bibliography}

\appendix 
\section{Proofs} \label{appendix: proofs}

\subsection{Proof of Proposition \ref{privacy}} \label{proof of RR}
\begin{proof}
Let $r$ be any element in the range of $M_{\rho}f$. 
Let $Z = \{z \in \{-1,1\}^n| f(z) =r\}$. Let $x$ and $x'$ differ only at $x_i$ for some $i \in [n]$. 
\[ \frac{\PP [M_{\rho}f(x) =r]}{\PP [M_{\rho}f(x') = r]}
 = \frac{\sum_{z \in Z} \PP_{y \sim N_{\rho}x}[y = z]}{\sum_{z \in Z} \PP_{y \sim N_{\rho}x'}[y = z]}
 = \frac{\sum_{z \in Z} \prod_{j \in [n]} \PP_{y_j \sim N_{\rho}x_j}[y_j = z_j]}{\sum_{z \in Z} \prod_{j \in [n]} \PP_{y_j \sim N_{\rho}x'_j}[y_j = z_j]}.
\]
The first equality is upon considering all cases of output of the randomized response resulting in a $z\in Z$. Then by definition that would result in the function $f$ evaluated on this output $z$ to be $r$. The second equality is due to the independence assumption across the voters choices.\\ 
Now, for any $z \in Z$,
\[ \PP_{y_j \sim N_{\rho}x_j}[y_j = z_j] =
\begin{cases}
\frac{1+\rho}{2} & \text{if } x_j = z_j \\
\frac{1-\rho}{2} & \text{if } x_j \neq z_j
\end{cases}
\quad \text{and} \quad
\PP_{y_j \sim N_{\rho}x'_j}[y_j = z_j] =
\begin{cases}
\frac{1+\rho}{2} & \text{if } x'_j = z_j \\
\frac{1-\rho}{2} & \text{if } x'_j \neq z_j
\end{cases}
\]
This is because $\frac{1-\rho}{2}$ is the probability of a misrecorded vote and $1-\frac{1-\rho}{2}=\frac{1+\rho}{2}$ is the probability otherwise. More explicitly, with probability $1-\rho$, it chooses to blur the ballot and the blurring is then done by picking uniformly out of the two options of $\{-1,1\}$ with probability $0.5$ each, out of which one pick would result in no change to the vote and the other would result in a misrecorded vote.
Also, for any $j \neq i$, 
\[\PP_{y_j \sim N_{\rho}x_j}[y_j = z_j] = \PP_{y_j \sim N_{\rho}x'_j}[y_j = z_j].\]
Thus, 
 \[
 \frac{1-\rho}{1+\rho}
 \leq
 \frac{\sum_{z \in Z} \prod_{j \in [n]} \PP_{y_j \sim N_{\rho}x_j}[y_j = z_j]}{\sum_{z \in Z} \prod_{j \in [n]} \PP_{y_j \sim N_{\rho}x'_j}[y_j = z_j]}
 \leq 
 \frac{1+\rho}{1-\rho},\]
 which completes the proof.
\end{proof}
%

\subsection{Proof of \cref{influence theorem}} \label{appendix proof of influence theorem}

\begin{proof}  Using conditional probability, we get that
\begin{align*}
I_i[M_{\rho}f] 
&= \EE_{x \sim \{-1,1\}^n, \forall j \neq i \text{ } z_j=y_j=x_j, y_i \sim N_{\rho}(1), z_i \sim N_{\rho}(-1)}\left[\left(\frac{f(y) - f(z)}{2}\right)^2\right] \\
&= \PP_{ y_i \sim N_{\rho}(1), z_i \sim N_{\rho}(-1)}[y_i =1, z_i = -1] \cdot \EE_{x \sim \{-1,1\}^n}\left[\left( \frac{f(x_{i \to 1}) - f(x_{i \to -1})}{2} \right)^2\right] \\
&+ \PP_{y_i \sim N_{\rho}(1), z_i \sim N_{\rho}(-1)}[y_i =-1, z_i = 1] \cdot \EE_{x \sim \{-1,1\}^n}\left[\left(\frac{f(x_{i \to 1}) - f(x_{i \to -1})}{2}\right)^2\right] \\
\end{align*}
Noting that 
\[ \PP_{y_i \sim N_{\rho}(1), z_i \sim N_{\rho}(-1)}\left[y_i =1, z_i = -1\right] = \left(\frac{1 + \rho}{2}\right)^2,\] 
\[\PP_{y_i \sim N_{\rho}(1), z_i \sim N_{\rho}(-1)}\left[y_i =-1, z_i = 1\right] = \left(\frac{1 - \rho}{2}\right)^2,\]
and that
\[ \EE_{x \sim \{-1,1\}^n}\left[\left( \frac{f(x_{i \to 1}) - f(x_{i \to -1})}{2} \right)^2\right] = I_i[f],\]
we get that 
\[I_i[M_{\rho}f] =\frac{1 + \rho^2}{2} I_i[f].\]
\end{proof}

\subsection{Proof of \cref{majority is welfare maximizer}} \label{proof majority welfare maximizer}

\begin{proof}
First, let us fix $x$. Note that 
\[w_x(f) = f(x)\cdot \sum_{i \in [n]} x_i.\] 
Since $f(x) \in \{-1,1\},$ $f(x)\cdot \sum_{i \in [n]} x_i$ is maximized when $f(x) = sign(\sum_{i \in [n]} x_i)$. Hence, $W(f)$ is maximized if $\forall x \in \{-1,1\}^n$,  $f(x) = sign(\sum_{i \in [n]} x_i)$, which is exactly the definition of the majority function.
\end{proof} 

\begin{remark}
Note that we used the condition that $n$ is odd to ensure that $sign$ function is well-defined. If $n$ was even, then the maximizers of $W(f)$ are again the majority functions where it does not matter who is elected if it is tied.
\end{remark}

\subsection{Proof of \cref{W = I}}\label{proof of W = I}

In the proof of this result, we use discrete Fourier analysis. It is a well-known result from the field of analysis of Boolean functions, that every function $f:\{-1,1\}^{n} \rightarrow \mathbb{R}$ can be uniquely expressed as a multilinear polynomial,
$$
f(x)=\sum_{S \subseteq[n]} \widehat{f}(S) \chi_{S}(x)
$$
where for any $S \in [n]$
$$
\chi_{S}(x)=\prod_{i \in S} x_{i}.
$$

This expression is called the Fourier expansion of $f$, and the real number $\widehat{f}(S)$ is called the Fourier coefficient of $f$ on $S$. Collectively, the coefficients are called the Fourier spectrum of $f$. The following is an essential result from discrete Fourier Analysis. \\

\begin{lemma}[Plancherel's Theorem]\label{Plancherel} For any functions $f,g : \{-1,1\}^n \to \RR$,
\[ E_{x \sim \{-1,1\}^n}[f(x)g(x)] = \sum_{S \subseteq [n]} \wh{f}(S)\wh{g}(S).\]
\end{lemma}

It is possible to neatly calculate many features of $f$ including the influences in terms of Fourier coefficients. \\

\begin{lemma}[Proposition 2.21, \cite{o2014analysis}] \label{monotone influence fourier} 
Let $f: \{-1,1\}^n \to \{-1,1\}$ be a monotone function and let the Fourier spectrum of $f$ be $f(x)=\sum_{S \subseteq[n]} \widehat{f}(S) \chi_{S}(x)$. Then, for any $i \in [n]$, 
\[I_i[f] = \wh{f}(\{i\}).\] 
\end{lemma}

It is also possible to calculate the welfare in terms of the Fourier coefficients by taking one step further from the proof of \cref{majority is welfare maximizer}. \\

\begin{lemma} \label{Welfare Fourier}
Let $f$ be any social choice function $f:\{-1,1\}^n \to \{-1,1\}$. Then, $W(f) = \sum_{i \in [n]} \wh{f}(\{i\})$.
\end{lemma}

\begin{proof} 
    By the definition of welfare,
\[W(f) = \EE_x[w_x(f)] = \EE_x[f(x) \cdot \sum_{i \in [n]} x_i ]=  \sum_{i \in [n]} \wh{f}(\{i\})\]
where the last equation follows from \cref{Plancherel}.
\end{proof}

We are ready to finish the proof.

\begin{proof}[Proof of \cref{W = I}]
    The proof follows immediately from \cref{monotone influence fourier} and \cref{Welfare Fourier}.
\end{proof}
\subsection{Proof of \cref{welfare theorem}} \label{proof of welfare theorem}

\begin{proof} We prove this identity by using a double-counting method and linearity of expectation. Fix $f$. For any $i \in [n]$, let $1_{i,x, \rho}$ be the indicator random variable defined as follows:
\[ 1_{i,x, \rho} =
\begin{cases}
1 & \text{if } M_{\rho}f(x) = x_i \\
-1 & \text{if } M_{\rho}f(x) \neq x_i
\end{cases}
\]
where the randomization is due to the randomized response. Note then when $x$ is given and $\rho=1$, there is no randomization because $M_{\rho}f(x) = f(x)$ with probability $1$. Therefore, $1_{i,x,1}$ is a deterministic function. For the sake of simplicity, we will abuse the notation and write $1_{i,x}$ instead of $1_{i,x,1}$ in the deterministic case. Then, 
\[w_x(M_{\rho}f) = \sum_{i \in [n]} 1_{i,x,\rho} \quad \text{and} \quad w_x(f) = \sum_{i \in [n]} 1_{i,x}.\]
Thus, 
\[ W(M_{\rho}f) = \EE_{M_{\rho},x}[w_x(f)] =  \EE_{x,M_{\rho}}[\sum_{i \in [n]} 1_{i,x,\rho}] = \sum_{i \in [n]}\EE_{x,M_{\rho}}[ 1_{i,x,\rho}]\]
and so 
\[ W(f) = \sum_{i \in [n]}\EE_{x}[ 1_{i,x}].\]
Now, we will show that for any $i \in [n]$,
\[ \EE_{x,M_{\rho}}[ 1_{i,x,\rho}] = \rho \cdot \EE_{x}[ 1_{i,x}].\]
First, note that
\[\EE_{x,M_{\rho}}[ 1_{i,x,\rho}] = \PP_{\substack{x \sim \{-1,1\}^n \\ y \sim N_{\rho}x}}[f(y) = x_i] - \PP_{\substack{x \sim \{-1,1\}^n \\ y \sim N_{\rho}x}}[f(y) \neq x_i]. \]
By using 
\[\PP_{\substack{x \sim \{-1,1\}^n \\ y \sim N_{\rho}x}}[f(y) = x_i] + \PP_{\substack{x \sim \{-1,1\}^n \\ y \sim N_{\rho}x}}[f(y) \neq x_i] = 1,\]
we get that 
\[ \EE_{x,M_{\rho}}[ 1_{i,x,\rho}]=2\cdot\PP_{\substack{x \sim \{-1,1\}^n \\ y \sim N_{\rho}x}}[f(y) = x_i] - 1.\]
By Fact \ref{switching lemma}, we can replace $x \sim \{-1,1\}^n, y \sim N_{\rho}x$ with $y \sim \{-1,1\}^n, x \sim N_{\rho}y$. Thus, by using conditional probability,
\begin{align*}
    \EE_{x,M_{\rho}}[1_{i,x, \rho}]
    &=2\cdot\PP_{\substack{y \sim \{-1,1\}^n \\ x \sim N_{\rho}y}}[f(y) = x_i] - 1 \\
    &= 2(\PP_{x \sim N_{\rho}y}[x_i = y_i] \cdot \PP_{\substack{y \sim \{-1,1\}^n }}[f(y) = y_i] + \PP_{x \sim N_{\rho}y}[x_i = -y_i] \cdot \PP_{\substack{y \sim \{-1,1\}^n }}[f(y) = -y_i]) -1\\
    &= (1+\rho)\cdot\PP_{\substack{y \sim \{-1,1\}^n }}[f(y) = y_i] + (1-\rho)\cdot\PP_{\substack{y \sim \{-1,1\}^n }}[f(y) \neq y_i] -1 \\
    &= \rho \cdot (\PP_{\substack{y \sim \{-1,1\}^n }}[f(y) = y_i] - \PP_{\substack{y \sim \{-1,1\}^n }}[f(y) \neq y_i]) \\
    &= \rho \cdot \EE_{x}[1_{i,x}]
\end{align*}
which completes the proof.
\end{proof}

\section{Social Choice Functions} \label{social choice theory preliminaries}
In this paper, we exclusively focus on social choice functions with two alternatives. There are many ways to interpret these functions. It can be considered as a two-candidate election or as a referendum in the context of political science. It can also be interpreted as a classifier in the context of Machine Learning. In this paper, we will generally give the interpretations in the context of two-candidate elections. 

In general, we work with the Boolean functions defined as $f: \{-1,1\}^n \to \RR$, and we denote the bit $i$ of the input $x$ by $x_i$ for any $i \in [n]$. However, we define welfare only for \emph{social choice functions}, that is the Boolean functions whose ranges are $\{-1,1\}$. We analyze accuracy only for the following specific social choice functions.

\begin{itemize}
    \item \textbf{Majority:} Suppose that $n$ is an odd number. The majority function of $n$ agents/voters is denoted by $Maj_n$ and defined as 
    \[ f(x) = sign(\sum_{i \in [n]} x_i)\]
    for any $x \in \{-1,1\}^n$ where $sign: \RR \to \{-1,0,1\}$ is the function such that 
    \[sign(a) = \frac{a}{|a|}\]
    for any $a \in \RR, a \neq 0$ and sign(0)=0.
    
    \item \textbf{Dictatorship:} For a given number $n$ and $i \in [n]$, the dictatorship of voter-$i$ is defined as 
    \[f(x) = x_i\]
    for any $x \in \{-1,1\}^n$.
    
    \item \textbf{AND$_n$:} The $AND_n$ function outputs $1$ if there is unanimity on $1$, outputs $-1$ otherwise. Namely,
    \[ f(x) = 
    \begin{cases}
    1 & \text{if } \forall i \in [n], x_i =1 \\
    -1 & \text{otherwise}
    \end{cases}\]
    
    \item \textbf{OR$_n$:} The $OR_n$ function outputs $1$ if at least one voter votes for $1$, and outputs $-1$ otherwise. In other words, it outputs $-1$ if there is unanimity on $-1$, outputs $1$ otherwise. Namely,
    \[ f(x) = 
    \begin{cases}
    -1 & \text{if } \forall i \in [n], x_i =-1 \\
    1 & \text{otherwise}
    \end{cases}\]
    
\end{itemize}

Note that, in this paper, we assume \emph{the impartial culture assumption}, that is the voters are not affected by each other and they vote independently uniform at random between two candidates.

\end{document}